\newtheorem{Thm}{Theorem}[section]
\newtheorem{Cor}[Thm]{Corollary}
\newtheorem{Lem}[Thm]{Lemma}
\numberwithin{equation}{section}
\newcommand{\R}{{\mathsf{R}}}
\newcommand{\f}{g}
\newcommand{\lambdadef}{\max\left(21, \left\lceil \tfrac{40}{3}(T-1)\ln D\right\rceil\right)}
\newcommand{\ldef}{2\lceil \log_\lambda D \rceil}
\newcommand{\gammadef}{ \lceil \max(8\log_\lambda D, 8\ln\tfrac{1}{\mu}) \rceil}
\newcommand{\mdef}{ \left\lceil\log\tfrac{1}{2\mu} + 2\log T + 2\log(1+\ell/4)\right\rceil }
\newcommand{\mdefsmall}{ \lceil\log(1/(2\mu)) + 2\log(T) + 2\log(1+\ell/4)\rceil}
\newcommand{\sdef}{ \lceil \log_q(2D+1) \rceil }
\newcommand{\coeff}{\mathtt{coeff}}
\newcommand{\expons}{\mathtt{exponents}}
\newcommand{\COL}{\mathcal{C}}
\newcommand{\FF}{{\mathbb{F}}}
\newcommand{\RR}{{\mathbb{R}}}
\newcommand{\bigoh}{\mathcal{O}}
\newcommand{\softoh}{\widetilde{\mathcal{O}}}
\newcommand{\softO}{\widetilde{\mathcal{O}}}  %
\newcommand{\ssum}{\textstyle\sum}
\newcommand{\SLP}{{\mathcal{S}}}
\newcommand{\tuples}{\mathtt{tuples}}
\newcommand{\ZZ}{{\mathbb{Z}}}
\newcommand{\Fq}{{\mathbb{F}_q}}
\DeclareMathOperator{\llog}{loglog}
\definecolor{darkgreen}{rgb}{0,.35,0}
\definecolor{darkblue}{rgb}{0,0,.35}
\definecolor{darkred}{rgb}{.35,0,0} \usepackage[pdfauthor={Andrew
\begin{document}

\title{Sparse interpolation over finite fields 
via low-order roots of unity}

\author{
Andrew Arnold\\
\small Cheriton School of Computer Science\\
\small University of Waterloo\\
\href{https://cs.uwaterloo.ca/~a4arnold/}{\tt a4arnold@uwaterloo.ca}
\and
Mark Giesbrecht\\
\small Cheriton School of Computer Science\\
\small University of Waterloo\\
\href{https://cs.uwaterloo.ca/~mwg/}{\tt mwg@uwaterloo.ca}
\and
Daniel S.\ Roche\\
\small Computer Science Department\\
\small United States Naval Academy\\
\href{http://www.usna.edu/cs/roche/}{\tt roche@usna.edu}
}

\maketitle

\begin{abstract}
  We present a new Monte Carlo algorithm for the interpolation of a
  straight-line program as a sparse polynomial $f$ over an
  arbitrary finite field of size $q$.  We assume {\em a priori} bounds
  $D$ and $T$ are given on the degree and number of terms of $f$.  The
  approach presented in this paper is a hybrid of the diversified and
  recursive interpolation algorithms, the two previous fastest known
  probabilistic methods for this problem. By making effective use of
  the information contained in the coefficients themselves, this new
  algorithm improves on the bit complexity of previous methods by a
  ``soft-Oh'' factor of $T$, $\log D$, or~$\log q$.
\end{abstract}

\section{Introduction}

Let $\FF_q$ be a finite field of size $q$ and consider a ``sparse'' polynomial
\begin{equation}
  \label{eqn:f}
  f  = \ssum_{i=1}^t c_iz^{e_i} \in \FF_q[z],
\end{equation}
where the $c_1,\ldots,c_t\in\FF_q$ are nonzero and the exponents
$e_1,\ldots,e_t\in\ZZ_{\geq 0}$ are distinct.  Suppose $f$ is provided
as a straight-line program, a simple branch-free program which
evaluates $f$ at any point (see below for a formal definition).
Suppose also that we are given an upper bound $D$ on the degree of $f$
and an upper bound $T$ on the number of non-zero terms $t$ of $f$.
Our goal is to recover the standard form \eqref{eqn:f} for $f$, that
is, to recover the coefficients $c_i$ and their corresponding
exponents $e_i$ as in \eqref{eqn:f}.  Our main result is as follows.

\begin{Thm}\label{thm:cost}
  Let $f \in \FF_q[z]$ with at most $T$ non-zero terms and degree at most $D$,
  and let $0 < \epsilon \le 1/2$.  Suppose we are given a
  division-free straight-line program $\SLP_f$ of length $L$ that
  computes $f$.  Then there exists an algorithm (presented below) that
  interpolates $f$, with probability at least $1 - \epsilon$, with a
  cost of
  \begin{equation*}
    \softoh\left( L T \log^2 D
      \left( \log D + \log q \right)
      \log\tfrac{1}{\epsilon}
    \right)%
    \footnote[2]{For
      functions $\phi, \psi : \RR_{>0} \rightarrow \RR_{>0}$, we
      say $\phi \in \softoh(\psi)$ if and only if $\phi \in \bigoh(\psi(\log
      \psi)^c)$ for a constant $c \geq 0$.}
  \end{equation*}
  bit operations.
\end{Thm}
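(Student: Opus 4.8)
The plan is to exhibit the algorithm explicitly as a composition of the subroutines developed above, and then to verify its success probability and bit cost. \emph{Set-up and the algorithm.} First I would fix, as functions of the inputs and of $\epsilon$, the parameters used throughout: the growth base $\lambda=\lambdadef$, the number of refinement rounds $\ell=\ldef$ (chosen so that $\lambda^{\ell/2}\ge D$), the repetition counts $\gamma=\gammadef$ and $m=\mdef$ that amplify a single round's reliability, and the extension degree $s=\sdef$, so that all arithmetic takes place in $\FF_{q^s}$, a field just large enough to separate exponents up to $2D$ and to host the roots of unity we need; I would set the per-step failure parameter $\mu$ to a $1/\mathrm{poly}(T,\ell)$ fraction of $\epsilon$. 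The algorithm itself is the promised hybrid of the diversified and recursive schemes: apply a random diversifying substitution $z\mapsto\alpha z$, so that with good probability the products $c_i\alpha^{e_i}$ are pairwise distinct and thus serve as labels for matching terms across different modular images; then, over $\ell$ rounds, refine the partial knowledge of each exponent. In round $j$ one evaluates $\SLP_f$ at points $\alpha z$ with $z$ a root of unity of a small prime order $p_j$ --- this is exactly where ``low-order roots of unity'' enter, since the algorithm never needs a single root of unity of order $\Theta(D)$, only several of polylogarithmic order, recombined through the Chinese Remainder Theorem --- recovers the resulting $T$-sparse image by Berlekamp--Massey, reads off the residues $e_i\bmod p_j$ (a discrete logarithm in the order-$p_j$ subgroup, done by brute force because $p_j$ is polylogarithmic), and CRT-combines with the residues from earlier rounds; $\gamma$ (resp.\ $m$) independent repetitions let one detect and discard rounds spoiled by a collision among reduced exponents or by an unlucky $\alpha$. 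Once the accumulated modulus exceeds $2D$ the exponents $e_i$ are determined, and then $c_i=(c_i\alpha^{e_i})/\alpha^{e_i}$.

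\emph{Correctness.} Conditioned on $\alpha$ being good (the diversification lemma above: probability $\ge 1-\mu$) and on each round's subroutine succeeding (the modular-interpolation lemma above: probability $\ge 1-\mu$ per repetition, boosted to overwhelming by taking a consistent majority over the $\gamma$, resp.\ $m$, repetitions), each round outputs the correct refined residues, so after $\ell$ rounds the output equals $f$. A union bound over the $\bigoh(\ell(\gamma+m))$ randomized sub-steps, together with the choice of $\mu$, gives overall failure probability at most $\epsilon$.

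\emph{Cost.} The total number of evaluation points is $\softoh(T\,\ell\,(\gamma+m))$; substituting the parameter values and using $\log_\lambda D\le\log D$ together with $\log(1/\mu)=\softoh(\log D+\log\tfrac1\epsilon)$, this is $\softoh(T\log^2 D\,\log\tfrac1\epsilon)$. Each evaluation is $L$ operations in $\FF_{q^s}$ --- or in a polylogarithmic-degree extension of it, which is where the order-$p_j$ roots of unity actually live --- and one such operation costs $\softoh(s\log q)=\softoh(\log D+\log q)$ bit operations with fast arithmetic. The per-round post-processing (Berlekamp--Massey on $\bigoh(T)$ values, $\bigoh(T)$ discrete logarithms in groups of polylogarithmic order, and the CRT step) is $\softoh(T)$ field operations per repetition and hence is dominated by the evaluation cost. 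Multiplying through, and folding in the one-time construction of $\FF_{q^s}$ and the search for small primes $p_j$ with $\mathrm{ord}_{p_j}(q)$ small, yields the claimed $\softoh(LT\log^2 D(\log D+\log q)\log\tfrac1\epsilon)$ bit operations.

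\emph{The main obstacle.} The delicate part is the probability bookkeeping that makes the logarithms come out to $\log^2 D$ rather than $\log^3 D$: because $\gamma$ and $m$ depend only logarithmically on $1/\mu$, one must check carefully that the subroutine's per-step failure bound is genuinely as small as stated --- in particular that collisions among reduced exponents are detected by the repetitions --- while keeping the total evaluation count inside $\softoh(T\log^2 D\log\tfrac1\epsilon)$. A secondary technical point is guaranteeing that the required low-order roots of unity sit in an extension of $\FF_q$ of only polylogarithmic degree, so that the ``$\log D+\log q$'' per-operation cost is not inflated.
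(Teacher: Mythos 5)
Your proposal describes a Prony-style algorithm built from a genuinely different toolkit than the paper's, and the difference is not cosmetic: two of your main ingredients do not work within the stated budget, and the mechanism that actually makes the paper's algorithm succeed is missing.

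The most serious gap is your reliance on concrete $p_j$-th roots of unity. You evaluate $\SLP_f$ at an honest root of unity of prime order $p_j$, recover the image via Berlekamp--Massey, and extract the residues $e_i\bmod p_j$ by brute-force discrete log. That requires working in the field $\FF_{q^{s'}}$ with $s' = \mathrm{ord}_{p_j}(q)$, and for a prime $p_j\in\Theta(T\log D)$ there is in general no reason for $\mathrm{ord}_{p_j}(q)$ to be polylogarithmic --- it can be as large as $p_j-1\in\Theta(T\log D)$. Your claim that one can ``search for small primes $p_j$ with $\mathrm{ord}_{p_j}(q)$ small'' has no support; over an arbitrary $\FF_q$ such primes need not exist in the required range, and you flag this only as ``a secondary technical point'' when it is in fact the central obstacle. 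The paper never touches a concrete root of unity: it probes the SLP symbolically over the ring $\FF_{q^s}[z]/\langle z^{p_i}-1\rangle$, producing the reduced polynomial $f(\alpha z)\bmod(z^{p_i}-1)$ at a cost of $\softoh(Lp_i)$ operations in $\FF_{q^s}$ (via Cantor--Kaltofen multiplication). The exponent residues are read off the degrees of the reduced terms for free --- no Berlekamp--Massey, no discrete log --- and the only extension degree $s$ needed is $\lceil\log_q(2D+1)\rceil$, chosen purely so that enough $\alpha$'s exist to detect deceptions.

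The second gap is that you omit the outer recursion over the term bound, which is what makes primes of size only $O(T\log D)$ usable. With primes that small, collisions among the $e_i$ modulo $p_i$ cannot be avoided; the paper's ``ok primes'' merely guarantee $\COL_g(p_i)<T/2$. One pass then provably reconstructs at least half the terms of $g$, after which the algorithm sets $f^*\leftarrow f^*+f^{**}$, halves $T$, and repeats --- $\log T$ times. Your description is a single sweep over $\ell$ primes that ``detects and discards rounds spoiled by a collision,'' which would require every surviving prime to be collision-free for all terms; that property forces primes of size $\Omega(T^2\log D)$ and returns you to the diversified algorithm's $T^2$ cost. Relatedly, the mechanism that replaces ``discard bad rounds'' is the coefficient-vector dictionary: for each prime $p_i$ and each residue $u$, the vector $\bigl(h_{i,u}(\alpha_1),\dots,h_{i,u}(\alpha_m)\bigr)\in\FF_{q^s}^m$ serves as a key, and a candidate term is accepted only if at least $\ell/2$ of the images vote for the same key. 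This majority-vote matching, together with Lemma~\ref{Lem:detect} bounding the number of undetected deceptions, is what lets the analysis tolerate collisions while keeping $m$ logarithmic; a single global $\alpha$ as in your sketch leaves undetected deceptions with non-negligible probability.

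In short, the parameter names $\lambda,\ell,\gamma,m,s,\mu$ are assigned correctly and the final arithmetic collapses to the right bound, but the algorithm those parameters govern is not the paper's, and its two load-bearing substitutes (concrete roots of unity with Prony/BM/DLP, and a one-shot ``discard bad rounds'' strategy) each break the claimed complexity.
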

This cost improves on previous methods by a factor of $T$, $\log D$,
or $\log q$, and may lay the groundwork for even further
improvements.  See Table \ref{tab:comp} below for a detailed
comparison of the complexity of various approaches to this problem.

\subsection{Straight-line programs and sparse polynomials}

The interpolation algorithm presented in this paper is for
straight-line programs, though it could be adapted to other more
traditional models of interpolation.  Informally, a straight-line
program is a very simple program, with no branches or loops,
which evaluates a polynomial at any point, possibly in an extension
ring or field. Straight-line programs serve as a very useful model to
capture features of the complexity of algebraic problems (see, e.g.,
\citep{Str90}) especially with respect to evaluation in extension rings, as
well as having considerable efficacy in practice (see, e.g.,
\citep{FIKY88}).

More formally, a division-free \emph{Straight-Line Program} over a ring
$\R$, henceforth abbreviated as an \emph{SLP}, is a branchless sequence of
arithmetic instructions that represents a polynomial function.  It
takes as input a vector $(a_1, \dots, a_K)$ and outputs a vector
$(b_1, \dots, b_L)$ by way of a series of instructions $\Gamma_i : 1
\leq i \leq L$ of the form $\Gamma_i : b_i \longleftarrow \alpha \star
\beta$, where $\star$ is an operation $'+', '-'$, or $'\times'$, and
$\alpha, \beta \in \R \cup \{a_1, \dots, a_K\} \cup \{b_0, \dots,
b_{i-1}\}$.  The inputs and outputs may belong to $\R$ or a ring
extension of $\R$.  We say a straight-line program \emph{computes} $f
\in \R[x_1, \dots, x_K]$ if it sets $b_L$ to $f(a_1, \dots, a_K)$.

The straight-line programs in this paper compute over finite fields
$\Fq$ with $q$ elements, and ring extensions of $\Fq$.  We  assume
that elements of $\Fq$ are stored in some reasonable representation
with $O(\log q)$ bits, and that each field operation requires
$\softO(\log q)$ bit operations.  Similarly, we assume that elements
in a field extension $\FF_{q^s}$ of $\Fq$ can be represented with
$O(s\log q)$ bits, and that operations in $\FF_{q^s}$ require
$\softO(s\log q)$ bit operations.  

Each of the algorithms described here determines $f$ by {\em probing}
its SLP: executing it at an input of our choosing and observing the
output.  This is analogous to evaluation of the polynomial at a point
in an extension of the ground field in the more traditional
interpolation model.  To fairly account for the cost of such a probe
we define the \emph{probe degree} as the degree of the ring extension
over $\Fq$ in which the probe lies.  A probe of degree $u$ costs
$\softoh(Lu)$ field operations, or $\softO(Lu\log q)$ bit operations.
The \emph{total probe size} is the sum of the probe degrees of all the
probes used in a computation.

Polynomials are also stored with respect to the power basis (powers of
$x$) in a {\em sparse representation}.  In particular, $f$ as in
\eqref{eqn:f} would be stored as a list $[(c_i,e_i)]_{1 \leq i \leq
  t}$.  Given such a polynomial $f$, we let $\coeff(f, k)$, denote the
coefficient of the term of $f$ of degree $k$ (which may well be zero).
We let $\expons(f)$ denote the sorted list of exponents of the nonzero
terms in $f$, and $\#f$ denote the number of nonzero terms of $f$.
When we write $r(z) = f(z) \bmod d(z)$, we assume that the modular
image $r$ is reduced, i.e., it is stored as the remainder of $f$
divided by $d$, and $\deg r < \deg d$. We frequently refer to such $r$
as ``images'' of the original polynomial $f$, as they reveal some
limited amount of information about $f$.

The problem of (sparse) interpolation can then be seen as one of
conversion between representations: Efficiently transform a polynomial
given by a straight-line program into a ``standard'' sparse
representation with respect to a power basis in $x$ as in \eqref{eqn:f},
given ``size'' constraints $D$ and $T$ as~above.

\subsection{Previous results}

The following table gives a comparison of existing algorithms for the
sparse interpolation of straight-line programs.

\begin{table}[h]
  \setlength{\tabcolsep}{5pt}
  \centering\renewcommand{\arraystretch}{1.2}
  \begin{tabular}{|l| r| c|}
    \hline
    & \multicolumn{1}{c|}{Bit complexity} & Type \\
    \hline
    Dense & $LD\log q$ & Det \\\hline
    Garg \& Schost & $LT^4\log^2 D\log q$ & Det \\\hline
    LV G \& S & $LT^3\log^2 D\log q$ & LV \\\hline
    Diversified & $LT^2\log^2 D(\log D+\log q)$ & LV \\\hline
    Recursive & $LT\log^3 D\log q \cdot \log\frac{1}{\epsilon}$ & MC \\\hline
    This paper & $LT\log^2 D(\log D + \log q) \cdot \log\frac{1}{\epsilon}$ &
    MC \\\hline
  \end{tabular}
  \caption{
    A comparison of interpolation algorithms for
    straight-line programs (ignoring polylogarithmic factors).\\ 
    Det=Deterministic, LV=Las Vegas, MC=Monte Carlo.
  \label{tab:comp}}
\end{table}

Most of these are probabilistic algorithms.  By a \emph{Las Vegas}
algorithm we mean one that runs in the expected time stated but
produces an output of guaranteed correctness.  A \emph{Monte Carlo}
algorithm takes an additional parameter $\epsilon\in (0,1)$ and
produces an output guaranteed to be correct with probability at least
$1-\epsilon$.

The algorithm of \cite{GarSch09} finds a {\em good prime},
that is, a prime that separates all the terms of $f$.  For $f$ given
by \eqref{eqn:f},
\begin{equation}\label{eqn:goodimage}
  f \bmod (z^p-1) = \ssum_{i=1}^t c_iz^{e_i \bmod p},
\end{equation}
and so the terms of the {\em good image} $f \bmod (z^p-1)$ remain
distinct provided the exponents $e_i$ are distinct modulo $p$.  This
good image gives us $t=\#f$, which in turn makes it easy
to identify other good primes. Their algorithm then constructs the
symmetric polynomial $\chi(y) = \prod_{i=1}^t (y-e_i)$ by Chinese
remaindering of images $\chi(y) \bmod p_j$, for sufficiently many good
primes $p_j$, $1 \leq j \leq \ell$.  Note that the image $f \bmod (z^{p_j}-1)$ gives the values $e_i
\bmod p_j$ and hence $\chi(y) \bmod p_j$.  Given $\chi$, the algorithm
then factors $\chi$ to obtain the exponents $e_i$.  The corresponding coefficients 
of $f$ may be obtained by inspection of any good image \eqref{eqn:goodimage}.
Their algorithm can be made
faster, albeit Monte Carlo, by using randomness; we probabilistically
search for a good prime by selecting primes at random over a specified
range, choosing as our good prime $p$ one for which the image $f \bmod
(z^p-1)$ has maximally many terms.

An information-theoretic lower bound on the total probe size required
is $\Omega(T(\log D+\log q))$ bits, the number of bits used to encode
$f$ in \eqref{eqn:f}.  This bound is met by Prony's
\citeyearpar{Pro95} original algorithm, which requires a total probe
size of $O(T\log q)$ under the implicit assumption that $q>D$.  
Much of the complexity of the sparse interpolation problem appears to
arise from the requirement to accommodate \emph{any} finite
field. Prony's algorithm is dominated by the cost of discrete
logarithms in $\FF_q$, for which no polynomial time algorithm is known
in general.  When there is a choice of fields (say, as might naturally
arise in a modular scheme for interpolating integer or rational
polynomials) more efficient interpolation methods have been developed.
\cite{Kal88:frag} demonstrates a method for sparse interpolation over
$\FF_p$ for primes $p$ such that $p-1$ is smooth; see \citep{Kal10:pasco} for further
exposition.  In our notation, this algorithm would require an
essentially optimal $\softO(LT(\log D+\log q))$ bit
operations. Parallel algorithms and implementations for this case of
chosen characteristic are also given by \cite{JavMon10}.
Moreover, our need for a bound $T$ on the number of non-zero terms is
motivated by the hope of a Las Vegas or deterministic algorithm.  The
early termination approach of \cite{KalLee03} identifies $t$ with high
probability at no asymptotic extra cost.

\subsubsection{Diversified Interpolation}

We say a polynomial $g(z)$ is {\em diverse} if its non-zero terms have
distinct coefficients.  The interpolation algorithm of
\cite{GieRoc11} distinguishes between images of distinct non-zero terms by {\em
  diversifying} $f$, that is, choosing an $\alpha$ such that $f(\alpha
z)$ is diverse. This Monte Carlo algorithm entails three probabilistic
steps.  First, it determines $t=\#f$ by searching for a probable good
prime.  Second, it finds an $\alpha$ that diversifies $f \bmod
(z^p-1)$.  Finally, it looks for more good primes $p$, and constructs the
exponents $e_i$ of $f$ by way of Chinese remaindering on the
congruences $e_i \bmod p$.  This gives $f(\alpha z)$, from which it is
straightforward to recover $f(z)$.

The diversified interpolation algorithm was initially described for a
``remainder black box'' representation of $f$, a weaker model than a
straight-line program.  To search for an appropriate $\alpha \in
\FF_q$, the algorithm requires that the field size $q$ is greater than
$T(T-1)D$.
Under the SLP model we can easily adapt the algorithm to work when $q
\leq T(T-1)D$; simply choose $\alpha$ from a sufficiently large field
extension.  This slight adaptation increases the cost of a probe by a
factor of $\softoh((\log D + \log T)/\log q)$.

The cost of sparse SLP interpolation using the diversified
algorithm is then
$\softoh(\log D + \log\tfrac{1}{\epsilon})$ probes of degree 
$\softoh((T^2\log D)\lceil(\log D)/(\log q)\rceil)$, 
where $\epsilon\in (0,1)$ is a given bound on the probability of failure.  
Since each operation in $\FF_q$ costs $\softoh(\log q)$ bit
operations, the cost in bit operations~is

\begin{equation}
  \label{eqn:divcost}
  \softoh\left(
    LT^2\log D
    (\log D + \log \tfrac{1}{\epsilon})
    (\log D + \log q)
  \right).
\end{equation}
For fixed $\epsilon$ this cost becomes 
$\softoh( LT^2\log^2 D(\log D+\log q))$.

The diversified and probabilistic Garg-Schost interpolation algorithms
may be made Las Vegas (i.e., guaranteed error free) by way of a
deterministic polynomial identity testing algorithm. The fastest-known
method for certifying that polynomials (given by algebraic circuits)
are equal over an arbitrary field, by \cite{BlaHar09}, requires
$\softoh(LT^2\log^2 D)$ field operations in our notation.  The gains
in faster interpolation algorithms would be dominated by the cost of
this certification; hence they are fast Monte Carlo algorithms whose
unchecked output may be incorrect with controllably small probability.

\subsubsection{Recursive interpolation}

The algorithm of \cite{ArnGieRoc13} is faster than
diversified interpolation when $T$ asymptotically dominates either
$\log D$ or $\log q$.  The chief novelty behind that algorithm is to
use smaller primes $p$ with relaxed requirements.  Instead of
searching for good primes separating all of the non-zero terms of $f$, we
probabilistically search for an {\em ok prime} $p$ which separates
{\em most} of the non-zero terms of $f$.  Given this prime $p$ we then
construct images of the form $f \bmod (z^{pq_j}-1)$, for a set of
coprime moduli $\{q_1, \dots, q_\ell\}$ whose product exceeds $D$, in
order to build those non-colliding non-zero terms of $f$.

The resulting polynomial $f^*$ contains these terms, plus possibly a
small number of {\em deceptive terms} not occurring in $f$, such that
$f-f^*$ now has at most $T/2$ non-zero terms.  The algorithm then updates the
bound $T \leftarrow T/2$ and recursively interpolates the
difference $g=f-f^*$.

The total cost of the recursive interpolation algorithm is 
$\softoh(\log D + \log\tfrac{1}{\epsilon})$ probes of degree 
$\softoh( T\log^2 D)$, 
for a total cost of
\begin{equation}
  \label{eqn:reccost} 
  \softoh\left(
    LT\log^2 D
    \left(\log D + \log\tfrac{1}{\epsilon}\right)
    \log q
  \right)
\end{equation}
bit operations, 
which is $\softoh(L T \log^3 D \log q)$ when $\epsilon\in (0,1)$ is a constant.

\subsection{Outline of new algorithm and this paper}

As in \citep{ArnGieRoc13}, our new algorithm recursively builds an
approximation $f^*$, initially zero, to the input polynomial $f$ given
by an SLP.  The algorithm interpolates $g=f-f^*$ with bounds $D \geq
\deg(g)$ and $T \geq \#g$.  We update $T$ as we update $g$.

We begin with the details of our straight-line program model in 
Section~\ref{sec:slp}.
Section~\ref{sec:primes} describes how we choose a set of $\ell
\in \softoh(\log D)$ ok primes $p_i \in \softoh(T\log D)$, $1 \leq i
\leq\ell$.   Given these primes $p_i$, we then compute images $g_{ij} =
g(\alpha_j z) \bmod (z^{p_i}-1)$ for choices of $\alpha_j$ that will
(probably) allow us to identify images of like terms of $g$.  This
approach relies on a more general notion of diversification.  We explain
how we choose the values $\alpha_j$ and give a probabilistic analysis in
Section \ref{sec:div}.  Section \ref{sec:terms} details how we use
information from the images $g_{ij}$ to construct at least
half of the terms of $g$.

Section \ref{sec:recurse} describes how the algorithm iteratively builds $f^*$,
followed by a probabilistic and cost analysis of the algorithm as a whole
in Section~\ref{sec:cost}.
Some conclusions are drawn in Section~\ref{sec:conclusion}.

\section{Probing an SLP}
\label{sec:slp}

In this paper we will only consider a single-input, division-free SLP
$\SLP_f$ that computes a univariate polynomial $f$ over a finite
field.  We can reduce multivariate $f \in \FF_q[x_1, \dots, x_n]$ of
total degree less than $D$, by way of a {\em Kronecker
  substitution}. Interpolating the univariate polynomial
$f(z,z^D, \dots, z^{D^{n-1}})$ preserves the number of non-zero terms of $f$ and
allows the easy recovery of the original multivariate terms, but
increases the degree bound to $D^n$.   See \citep{ArnRoc14} for
recent advances on this topic.

We will {\em probe} $\SLP_f$, that is, execute the straight-line
program on inputs of our choosing.  One could naively input an
indeterminant $z$ into $\SLP_f$, and expand each resulting polynomial
$b_i$.  This sets $b_L$ to $f(z)$.  Unfortunately, intermediate
polynomials $b_i$ may have a number of non-zero terms, and be of degrees,
which grow exponentially in terms of $L$, $T$, and $D$.

Instead, we limit the size of intermediate results by selecting a
symbolic $p$-th root of unity $z \in \FF_q[z]/\langle z^p-1 \rangle$
as input, for different choices of $p$.  At each instruction we expand
$b_i \in \FF_q[z]$, and reduce modulo $(z^p-1)$.  This sets $b_L$ to
$f(z) \bmod (z^p-1)$.  Using the method of \cite{CanKal91}, the cost
of one such instruction becomes $\softoh(p)$ field operations.  More
generally we will produce images of the form $f(\alpha z) \bmod
(z^p-1)$, where $\alpha$ is algebraic with degree $s$ over $\FF_q$.
That is, we choose as input $\alpha z$, where $z$ is again a $p$-th
root of unity.  Here the instruction cost becomes $\softoh(p)$
operations in $\FF_{q^s}$, or $\softoh(ps)$ operations in $\FF_q$.  We
call $ps$ the {\em probe degree} of a probe with input $\alpha z$.

We will also have to produce such images of polynomials given
explicitly by sparse representations rather than implicitly in a
SLP.  Given a single term $cz^e$,
$e<D$, we can produce $c(\alpha z)^e \bmod (z^p-1)$ by computing
$\alpha^e$ via a square-and-multiply approach and reducing the
exponent of $z^e$ modulo $p$.
For $p<D$ the former step dominates this cost and requires $\softoh(s\log D)$ operations in $\FF_q$.
For $f^*$, a sum of at most $T$ such terms and $p < D$, the cost of constructing $f^*(\alpha z) \bmod (z^p-1)$ becomes $\softoh(sT\log D)$.

In particular, given an SLP $\SLP_f$ and a sparse polynomial $f^*$,
we will need to compute $f(\alpha z) - f^*(\alpha z) \bmod (z^p-1)$,
as described above.
Procedure~\ref{proc:ComputeImage} is a subroutine to perform such a
computation, and its cost is
$\softoh(Lsp + sT\log D)$ operations in $\FF_q$. When $p$ is at least of magnitude
$\Omega((T\log D)/L)$, the SLP probe dominates the cost, and it becomes
simply $\softoh(Lsp\log q)$ bit operations.

\begin{procedure}[ht]
\caption{ComputeImage($\SLP_f, f^*, \alpha, p$)}\label{proc:ComputeImage}
\KwIn{
	$\SLP_f$, an SLP computing $f\in\FF_q[z]$;
        $f^* \in \FF_q[x]$ given explicitly;
        $\alpha \in \FF_{q^s}$;
        integer $p\ge 1$.
}
\KwResult{$f(\alpha z) - f^*(\alpha z) \bmod (z^p - 1)$}

\smallskip

$g_{p,\alpha} \longleftarrow f(\alpha z) \bmod (z^p-1)$, 
  by probing $\SLP_f$ at $\alpha z$ over
  \phantom{$g_{p,\alpha} \longleftarrow$ }$\FF_{q^s}[z]/\langle z^p-1 \rangle$; \\

\ForEach{$e \in \expons(f^*)$}{
  $g_{p,\alpha} \longleftarrow g_{p,\alpha} 
    - \coeff(f^*,e) \cdot \alpha^e \cdot z^{e \bmod p}$
}

\Return{ $g_{p,\alpha}$ }
\end{procedure}

\section{Selecting primes}
\label{sec:primes}

The aim of Sections \ref{sec:primes}--\ref{sec:terms} 
is to build a polynomial $f^{**}$ that
contains at least half of the terms of $g=f-f^*$.  For notational
simplicity, in these sections we write $g=\ssum_{i=1}^t c_iz^{e_i}$
and let $T$ and $D$ bound $\#g$ and $\deg(g)$ respectively.

We will use images of the form $\f \bmod (z^p-1)$, $p$ a prime, in order
to extract information about the terms of $\f$.  We say two terms
$cz^{e}$ and $c'z^{e'}$ of $\f$ {\em collide} modulo $(z^p-1)$ if $p$
divides $e - e'$, that is, if their reduced sum modulo $(z^p-1)$ is a single term.  If a term $cz^{e}$ of $g$ does not collide
with any other term of $\f$, then its image $cz^{e \bmod p}$ contained
in $\f \bmod (z^p-1)$ gives us its coefficient $c$ and the image of the
exponent $e \bmod p$.

We let $\COL_\f(p)$ denote the number of terms of $\f$ that collide with
any other term
modulo $(z^p-1)$. We need $\COL_\f(p)$ to be small so that the next
approximation $f^*$ of $f$ contains many terms of $f$, but the primes
$p$ must also be small so that the cost of computing each $f^*$ is not too
great. In this technical section, we show how to bound the size of the
primes $p$ to balance these competing concerns and minimize the total
cost of the algorithm.

In the first phase of the algorithm, we will look for a set of primes $p_i \in \softoh(T\log D)$, and corresponding images
$$
\f_i = \f \bmod (z^{p_i}-1),\qquad  1 \leq i \leq \ell,
$$
such that, in each of the images $g_i$, most of the terms of $g$ are not
in any collisions.  More specifically, we
want the primes $(p_i)_{1 \leq i \leq \ell}$ and images $(g_i)_{1 \leq i
\leq \ell}$ to meet the following criteria:
\begin{enumerate}[label=(\roman{enumi})]
\item\label{item:half_terms}  At least half of the terms of $g$ do not collide modulo $(z^{p_i}-1)$ with any other term of $g$ for at least $\lceil \ell/2 \rceil$ of the primes $p_i$;
\item\label{item:half_col}  Any pair of terms of $g$ collide modulo $(z^{p_i}-1)$ for fewer than $\lceil \ell/2 \rceil$ of the primes $p_i$;
\item\label{item:prod_D} Any $\lceil \ell/2 \rceil$ of the primes $p_i$ have a product exceeding~$D$.
\end{enumerate}
Every nonzero term of an image $g_i$ is the image of a sum of terms of $g$.  
If we are able to collect terms of the images $g_i$, $1 \leq i \leq \ell$, that
are images of the same sum of terms of $g$, then from
\ref{item:half_col}, any such collection containing a term for at least
$\ell/2$ images $g_i$ must be an image of a single term of $g$.  By
\ref{item:half_terms}, at least half of the terms of $g$ will
produce such a collection.  By \ref{item:prod_D}, such
collections will contain sufficient information to reconstruct the exponent of the
corresponding term of $g$.  Thus, given a means of collecting terms in this way,
we will be able to construct half of the terms of $g$.

We note, since any two terms of $g$ have degree differing by at most
$D$, that \ref{item:prod_D} implies \ref{item:half_col}.  To satisfy
\ref{item:half_terms}, it suffices that $\COL_\f(p_i) \leq T/2$ for
each prime $p_i$.  We will accordingly call $p$ an {\em ok prime} in
this paper if $\COL_\f(p)<T/2$.  To this end we establish a range in
which most primes $p$ have $\COL_\f(p)<T/4$, half the desired bound.

\begin{Lem}[\citealt{ArnGieRoc13}]
\label{lem:lambda}
Let $\f \in \FF_q[z]$ be a polynomial with $t \leq T$ terms and degree
$d \leq D$.  Let
\begin{equation}\label{eqn:lambda}
\lambda = \lambdadef.
\end{equation}
Then fewer than half of the primes $p$ in the range $[\lambda, 2\lambda]$ satisfy $\COL_g(p) \geq T/4$.
\end{Lem}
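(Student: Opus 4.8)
The plan is to bound the number of "bad" primes $p \in [\lambda, 2\lambda]$, i.e.\ those with $\COL_g(p) \ge T/4$, by a counting argument on colliding pairs of exponents. First I would observe that if $\COL_g(p) \ge T/4$, then at least $T/8$ \emph{pairs} of terms of $g$ collide modulo $z^p-1$: each colliding term participates in a collision with at least one other term, so the number of colliding pairs is at least $\lceil (T/4)/2 \rceil \ge T/8$ (being slightly careful with the ceiling and with the fact that a colliding ``block'' of $k$ terms contributes $k-1$ independent pairs, which is at least $k/2$ for $k\ge 2$). A pair of terms $c_i z^{e_i}$, $c_j z^{e_j}$ collides modulo $z^p-1$ exactly when $p \mid (e_i - e_j)$, and since $0 < |e_i - e_j| \le D$, any fixed pair $(i,j)$ can be ``killed'' by at most $\lfloor \log_\lambda D \rfloor \le \log_\lambda D$ primes in the range $[\lambda, 2\lambda]$ (a prime $p \ge \lambda$ divides $|e_i-e_j| \le D$ at most $\log_\lambda D$ times, and in fact the prime divisors $\ge \lambda$ of a fixed integer $\le D$ number at most $\log_\lambda D$).

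Next I would double-count the set $S = \{(p, \{i,j\}) : p \in [\lambda,2\lambda] \text{ prime}, \ p \mid e_i - e_j\}$. On one hand, summing over bad primes gives $|S| \ge (\#\text{bad primes}) \cdot T/8$. On the other hand, summing over pairs gives $|S| \le \binom{t}{2} \log_\lambda D \le \tfrac{T^2}{2}\log_\lambda D$ — but this is too weak by a factor of roughly $T$. The fix, which is the crux of the argument, is to \emph{not} sum over all $\binom{t}{2}$ pairs but to use the actual total probe structure: for a fixed prime $p$, the colliding pairs modulo $z^p-1$ are constrained, and more importantly one should count pairs $\{i,j\}$ with $p \mid e_i - e_j$ against the \emph{number of primes dividing that difference}, then sum. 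Concretely, $\sum_{\text{primes } p \in [\lambda,2\lambda]} \#\{\{i,j\} : p \mid e_i-e_j\} = \sum_{\{i,j\}} \#\{p \in [\lambda,2\lambda] : p \mid e_i - e_j\} \le \binom{t}{2}\log_\lambda D$. To turn this into a statement about \emph{bad} primes rather than all primes, I need a lower bound on the number of primes in $[\lambda, 2\lambda]$: by a Chebyshev-type estimate (or Bertrand/explicit prime-counting bounds), there are at least $c\lambda/\ln\lambda$ primes in that interval for an explicit constant $c$. Then if $B$ denotes the number of bad primes, $B \cdot T/8 \le \binom t2 \log_\lambda D$, so $B \le \tfrac{4 T t^2 \log_\lambda D}{T} \cdot$ (constant) $= O(T^2 \log_\lambda D)$, and I must check that this is less than half the total prime count $\tfrac12 \cdot c\lambda/\ln\lambda$. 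Plugging in $\lambda = \max(21, \lceil \tfrac{40}{3}(T-1)\ln D\rceil)$ and using $\log_\lambda D = \ln D/\ln\lambda$, the $\lambda$ in the numerator of the prime count beats the $T^2$ because $\lambda \gtrsim T \ln D$; the precise constant $\tfrac{40}{3}$ is exactly what makes the inequality $B < \tfrac12(\#\text{primes})$ go through, so the bulk of the work is verifying this one numerical inequality, with the threshold $21$ handling the small-$T$ or small-$D$ edge cases where the asymptotic prime-counting bound needs a concrete floor.

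The main obstacle I anticipate is getting the constants right: I need an explicit lower bound on $\pi(2\lambda) - \pi(\lambda)$ valid for all $\lambda \ge 21$ (this is where the constant $21$ presumably enters — it is the point past which a clean Chebyshev estimate like $\pi(2\lambda)-\pi(\lambda) \ge \tfrac{3\lambda}{5\ln\lambda}$ or similar holds), and I must track the ceiling in ``$\COL_g(p) \ge T/4 \Rightarrow$ at least (something)$\cdot T$ colliding pairs'' carefully, since sloppiness by a factor of $2$ propagates into the constant $\tfrac{40}{3}$. Since this lemma is attributed to \citep{ArnGieRoc13}, I would expect the authors to either cite that paper's proof directly or reproduce it; either way the structure is the double-counting of (prime, colliding-pair) incidences bounded above by $\binom t2 \log_\lambda D$ and below by $B \cdot \Theta(T)$, combined with the Chebyshev lower bound on the number of primes, and the choice of $\lambda$ engineered so the latter dominates.
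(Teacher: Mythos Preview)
The paper does not prove this lemma at all; it simply cites \citet{ArnGieRoc13}. Your double-counting plan --- bound bad primes via incidences $(p,\{i,j\})$ with $p\mid e_i-e_j$, upper-bound by $\binom{t}{2}\log_\lambda D$, and compare against a Chebyshev/Rosser--Schoenfeld lower bound on the number of primes in $[\lambda,2\lambda]$ --- is exactly the standard argument from that paper, and it does recover the constant $\tfrac{40}{3}$.

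There is, however, a genuine arithmetic slip in your middle paragraph that sends you on a detour. After writing $B\cdot T/8 \le \binom{t}{2}\log_\lambda D$, you assert this is ``too weak by a factor of roughly $T$'' and that a fix is needed. It is not: that inequality already gives
\[
B \;\le\; \frac{8}{T}\binom{t}{2}\log_\lambda D \;\le\; \frac{8}{T}\cdot\frac{T(T-1)}{2}\log_\lambda D \;=\; 4(T-1)\log_\lambda D,
\]
which is $O(T\log_\lambda D)$, not $O(T^2\log_\lambda D)$. Your ``fix'' paragraph then re-derives precisely the same bound you just had. From $B\le 4(T-1)\log_\lambda D = 4(T-1)\ln D/\ln\lambda$ and the Rosser--Schoenfeld estimate $\pi(2\lambda)-\pi(\lambda)\ge 3\lambda/(5\ln\lambda)$ (valid for $\lambda\ge 21$, which is exactly why the $\max$ with $21$ is there), the requirement $B < \tfrac12\cdot\tfrac{3\lambda}{5\ln\lambda}$ becomes $\lambda > \tfrac{40}{3}(T-1)\ln D$, and the ceiling in the definition of $\lambda$ makes it strict. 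So your outline is correct once you delete the spurious ``too weak'' claim and the redundant fix.
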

We will look for ok primes in the range $[\lambda, 2\lambda]$.  To satisfy \ref{item:prod_D}, we will select $\ell = \ldef$
primes.  Corollary~\ref{Cor:lambda_ell} gives us a means of selecting a set of primes for which a constant proportion of those primes $p$ have fewer than $T/4$ colliding terms.
\begin{Cor}
\label{Cor:lambda_ell}
Suppose $\f, D, T$, and $\lambda$ are as in Lemma \ref{lem:lambda}.
Let $0 < \mu < 1$, and suppose
\begin{align}\label{eqn:ell_lambda}
\ell &= \ldef ,& \gamma &= \gammadef.
\end{align}
If we randomly select $\gamma$ distinct primes from $[\lambda,
2\lambda]$, then $\COL_\f(p) \leq T/4$ for at least $\ell$ of the
chosen primes $p$, with probability at least $1-\mu$.
\end{Cor}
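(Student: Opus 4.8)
The plan is to reduce the statement to a standard tail bound for a hypergeometric-type experiment. By Lemma~\ref{lem:lambda}, if we let $P \subseteq [\lambda, 2\lambda]$ denote the set of primes in that range and $B \subseteq P$ the ``bad'' primes with $\COL_g(p) \geq T/4$, then $|B| < |P|/2$, i.e.\ strictly fewer than half the primes in the range are bad. We draw $\gamma$ distinct primes uniformly at random from $P$ (sampling without replacement), and we want to show that with probability at least $1 - \mu$ at least $\ell$ of them land outside $B$, i.e.\ are ``good'' primes with $\COL_g(p) \leq T/4$.

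First I would argue that sampling without replacement only helps: the number of good primes among the $\gamma$ chosen stochastically dominates the corresponding count under sampling \emph{with} replacement, where each draw is independently good with probability $1 - |B|/|P| > 1/2$. (Alternatively one can invoke a Chernoff–Hoeffding bound that holds directly for hypergeometric sampling, e.g.\ the version for sums of negatively associated indicators.) So it suffices to bound, for $X \sim \mathrm{Binomial}(\gamma, \rho)$ with $\rho > 1/2$, the probability that $X < \ell$. Since $\gamma \geq 8\ln(1/\mu)$ and $\gamma \geq 8\log_\lambda D$ while $\ell = 2\lceil \log_\lambda D\rceil$, we have $\ell \leq \gamma/4 \leq (\rho/2)\gamma < \mathbb{E}[X]/2$ (using $\rho > 1/2$), so the event $X < \ell$ is a deviation of $X$ below half its mean. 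A multiplicative Chernoff bound then gives $\Pr[X < \ell] \leq \Pr[X < \mathbb{E}[X]/2] \leq \exp(-\mathbb{E}[X]/8) \leq \exp(-\gamma\rho/8) \leq \exp(-\gamma/16)$, and one checks $\gamma \geq 16\ln(1/\mu)$ suffices to make this at most $\mu$. (The constant in $\gamma = \lceil \max(8\log_\lambda D, 8\ln\frac{1}{\mu})\rceil$ should be reconciled with whatever Chernoff constant is used; I would state the bound in a form where $\gamma \geq c\max(\log_\lambda D, \ln\frac1\mu)$ for the appropriate $c$, then verify the displayed constant works, possibly by using the sharper bound $\Pr[X \le \gamma/4] \le \exp(-\gamma\cdot \mathrm{KL}(1/4\,\|\,\rho))$ with $\rho > 1/2$, for which $\mathrm{KL}(1/4\,\|\,1/2) = \frac34\ln\frac32 + \frac14\ln\frac12 > 1/8$.)

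The only mild subtlety is a degenerate-case check: Lemma~\ref{lem:lambda} and the corollary implicitly assume the range $[\lambda, 2\lambda]$ contains at least $\gamma$ distinct primes so that the sampling is well-defined. I would note that $\lambda \geq 21$, and by a Chebyshev/Bertrand-type estimate the number of primes in $[\lambda, 2\lambda]$ is $\Omega(\lambda/\ln\lambda)$, which comfortably exceeds $\gamma \in O(\log_\lambda D + \ln\frac1\mu)$ for the parameter regimes of interest; this is exactly the same bookkeeping already needed for Lemma~\ref{lem:lambda} to be meaningful, so I would either cite it from \citep{ArnGieRoc13} or dispatch it in one line. The main obstacle, such as it is, is purely the constant-chasing: making the specific constants $8$ in $\gamma$ and $2$ in $\ell$ fit together with an off-the-shelf Chernoff bound. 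Everything else is a direct application of concentration, so I expect the proof to be short once the right tail inequality is pinned down.
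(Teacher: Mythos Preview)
Your proposal is correct and follows essentially the same route as the paper: both reduce to a tail bound on the number of ``good'' primes among the $\gamma$ sampled, using that fewer than half the primes in $[\lambda,2\lambda]$ are bad by Lemma~\ref{lem:lambda}, and both handle sampling without replacement by appealing to the fact that Hoeffding-type bounds hold in that setting. The only difference is that the paper applies Hoeffding's additive inequality directly (citing Theorems~1 and~4 of \citep{Hoe63}) to get $\Pr\{\Sigma X_i \le \gamma/4\} \le \exp(-2(\gamma/4)^2/\gamma) = \exp(-\gamma/8)$ in one step, which makes the constant $8$ in $\gamma$ fall out immediately---so you can skip the multiplicative-Chernoff detour and the KL computation. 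Your side discussion of whether $[\lambda,2\lambda]$ contains $\gamma$ primes is handled separately in the paper, outside the proof of the corollary.
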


\begin{proof}
By Lemma \ref{lem:lambda}, at least half of the primes $p$ in $[\lambda,
2\lambda]$ have $\COL_\f(p)<T/4$.  We require that a proportion of $\tfrac{1}{4}$ of the $\gamma$ primes randomly selected to have this property.

In order to obtain a bound on at least $\gamma/4$ of the primes having this
property, define random variables $X_i$, for $1 \le i \le \gamma$, to be 1 if
the $i$th prime $p_i$ satisfies $\COL_\f(p_i)<T/4$, and $0$ otherwise. From
the reasoning above, we know that $E[X_i] \ge \tfrac{1}{2}$. Therefore, the
expected number of primes with this property, $E[\Sigma X_i]$, is at least
$\tfrac{\gamma}{2}$.

Hoeffding's inequality provides proof that the \emph{actual} value of $\Sigma X_i$
is not too much smaller than this expected value. Specifically, Theorems 1 and 4
of \citep{Hoe63} show that
\[
\Pr\{\Sigma X_i \le \tfrac{\gamma}{4}\} \le 
    \exp\left(-2\left(E[\Sigma X_i] - \tfrac{\gamma}{4}\right)^2\right)
    \le \exp\left(\tfrac{-\gamma}{8}\right).
\]
For $\gamma \ge
8\ln\tfrac{1}{\mu}$, this is less than~$\mu$. 
\end{proof}

We will generate some $\gamma$ primes, of which $\ell$ primes $p$ have $\COL_\f(p)<T/4$.  In order to identify some primes $p$ for which $\COL_\f(p)$ is low, we have the following lemma.

\begin{Lem}[\cite{ArnGieRoc13}, Corollary 11]
  \label{Lem:choose_p}
  Suppose $\f \bmod (z^q-1)$ has $s_q$ non-zero terms, and $\f \bmod
 (z^p-1)$ has $s_p$ non-zero terms, with $s_p \geq s_q$.  Then $\COL_\f(p) \leq
  2\COL_\f(q)$.
\end{Lem}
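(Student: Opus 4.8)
The plan is to bound $\COL_\f(p)$ by comparing the ``collision structure'' of $\f$ modulo $(z^p-1)$ against that modulo $(z^q-1)$, using only the hypothesis $s_p \ge s_q$ on the number of surviving (nonzero) terms. Write $t = \#\f$. Reducing modulo $(z^p-1)$ partitions the $t$ terms of $\f$ into ``bins'' indexed by residues $e_i \bmod p$; the nonzero terms of $\f \bmod (z^p-1)$ correspond to bins whose term-coefficients do not cancel to zero. The quantity $\COL_\f(p)$ counts the terms of $\f$ lying in bins of size at least $2$. The key combinatorial observation is that, for \emph{any} prime $p$, the number of terms involved in collisions is controlled by the deficiency $t - s_p'$, where $s_p'$ is the number of \emph{occupied} bins (bins containing at least one term): each collision bin of size $k$ contributes $k$ to $\COL_\f(p)$ but only $1$ to $s_p'$, so $\COL_\f(p) = t - (\text{number of singleton bins}) \le 2(t - s_p')$, since a bin of size $k\ge 2$ satisfies $k \le 2(k-1)$.

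The remaining gap is that $s_p$ counts \emph{nonzero} terms of the image, not occupied bins: an occupied bin can still vanish if its coefficients sum to zero in $\FF_q$. So $s_p \le s_p'$, and likewise $s_q \le s_q'$. The next step is to relate $s_q'$ (occupied bins mod $q$) to $t$: since every term of $\f$ occupies exactly one bin, we trivially have $s_q' \le t$, and more usefully $t - s_q' = \COL_\f(q) - (\text{number of collision bins mod } q) $, i.e. $t - s_q'$ equals the number of ``extra'' terms beyond one per occupied bin, which is at most $\COL_\f(q)$ (and at least $\COL_\f(q)/2$). Combining: $\COL_\f(p) \le 2(t - s_p') \le 2(t - s_p) \le 2(t - s_q) \le 2(t - s_q')\cdot(\text{?})$ --- here one must be careful, because $s_q \le s_q'$ goes the wrong way. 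The right chain is $\COL_\f(p) \le 2(t - s_p')$, then $t - s_p' \le t - s_p$ (since $s_p \le s_p'$), then $t - s_p \le t - s_q$ (the hypothesis), and finally $t - s_q \le \COL_\f(q)$: this last inequality holds because the terms \emph{not} counted in any collision modulo $q$ are exactly the singletons, there are $t - \COL_\f(q)$ of them, each occupying its own distinct bin, so $s_q \ge t - \COL_\f(q)$, i.e. $t - s_q \le \COL_\f(q)$. Chaining these gives $\COL_\f(p) \le 2\COL_\f(q)$, as desired.

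I would write the argument in exactly that order: (1) the identity/inequality $\COL_\f(p) \le 2(t - s_p')$ from the bin-size bound $k \le 2(k-1)$ for $k \ge 2$; (2) $s_p \le s_p'$ (possible cancellation only removes terms); (3) the hypothesis $s_p \ge s_q$; (4) $t - s_q \le \COL_\f(q)$ from counting singletons modulo $q$. Since this is cited as Corollary~11 of \citep{ArnGieRoc13}, the expected write-up is short; the only subtlety worth spelling out is the direction of the inequalities around possible coefficient cancellation, making sure each step loses terms in the favorable direction.

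The main obstacle I anticipate is purely bookkeeping: keeping straight the three-way distinction between (a) the number of terms of $\f$, (b) the number of \emph{occupied} residue classes, and (c) the number of \emph{surviving nonzero} terms of the image, and verifying that the chain of inequalities respects all of these. In particular one must confirm that cancellation of coefficients within a bin never \emph{helps} the adversary in a way that breaks the bound — it only decreases $s_p$, which appears on the side of the inequality where decreasing it weakens the hypothesis, so it is genuinely safe. No sharper tool (no probabilistic argument, no number theory about the primes themselves) is needed; the statement is a deterministic, pigeonhole-flavored inequality valid for \emph{every} pair $p, q$ with $s_p \ge s_q$.
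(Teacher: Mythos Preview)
Your argument is correct. The chain
\[
\COL_\f(p) \;\le\; 2(t - s_p') \;\le\; 2(t - s_p) \;\le\; 2(t - s_q) \;\le\; 2\COL_\f(q)
\]
is valid, and you have been careful with the one genuine subtlety: cancellation can only decrease $s_p$ relative to the number $s_p'$ of occupied residue classes, which moves the inequality in the direction you need in step~(2); and in step~(4), singleton bins modulo $q$ cannot cancel (each contains a single nonzero coefficient), so $s_q \ge t - \COL_\f(q)$ holds unconditionally.

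As for comparison with the paper: there is nothing to compare against. The present paper does not prove this lemma at all; it is quoted verbatim as Corollary~11 of \citep{ArnGieRoc13} and used as a black box. Your write-up is exactly the kind of short pigeonhole argument one would expect to find at the cited location, and the exposition in your third paragraph (ordering the four steps and flagging the cancellation direction) would serve perfectly well as a self-contained proof here. The meandering in your second paragraph, where you briefly head toward the wrong inequality $t - s_q \le (t - s_q')\cdot(?)$ before correcting course, should of course be excised from any final version; the clean chain you arrive at is the whole proof.
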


Corollary \ref{Cor:lambda_ell} guarantees with high
probability that at least $\ell$ primes $p_i$ of the $\gamma$ selected
satisfy $\COL_\f(p_i) < T/4$. Call these the ``best primes''.
Unfortunately, there is no easy way to determine $\COL_\f(p_i)$ at this point,
so we do not know which $\ell$ primes are the best primes!

Our solution to this seeming dilemma is to order the primes in decreasing
order of $\#\f_i$. Now consider any of the first $\ell$ primes $p_i$ in this
ordering. Since $\#\f_i$ is among the largest $\ell$ values, we know that
$\#\f_i \ge \#\f_j$ for some $p_j$ that is one of the ``best primes'' and
satisfies $\COL_\f(p_j) < T/4$. Then by Lemma~\ref{Lem:choose_p},
we know that $\COL_\f(p_i) \le 2\COL_\f(p_j)$, which is less than $T/2$.

In other words, the first $\ell$ primes in this ordering are not necessarily
the ``best'', but they are good enough because they all satisfy
$\COL_\f(p_i) < T/2$.
This method is
described in procedure \ref{proc:FindPrimes} below.

The statement of Corollary \ref{Cor:lambda_ell} assumes 
that there are a certain number of primes in $[\lambda, 2\lambda]$, in
order to pick $\gamma$ of them.
Letting $n$ be the actual number of primes in this range, we
require that $n \geq \gamma=\gammadef$.  This puts further constraints
on $\lambda$.  By Corollary 3 of \cite{RosSch62}, and using the
definition of $\lambda$ from Lemma~\ref{lem:lambda},
\begin{equation*}
n \geq 3\lambda/(5\ln\lambda) \geq 8(T-1)\log_\lambda D,
\end{equation*}
which is at least $8\log_\lambda D$ for $T \geq 2$.  Thus, for $T>1$,
we only require that $n \geq \lceil 8\ln\tfrac{1}{\mu}\rceil$.  
If the number of primes $n$ is even smaller than this, one could simply
compute all $n$ primes in the interval and use them all instead of picking a
random subset.
Since at least half these primes have $\COL_\f(p)<T/4$, we would only
require that $n\geq 2\ell$, which must be true since
$n\ge 8\log_\lambda D$.

To ensure that computing all $n$ primes in the interval does not
increase the overall cost of the algorithm, consider that in this case
$\gamma$ exceeds $3\lambda/(5\ln\lambda)$.
Then the upper bound of Corollary 3 of \cite{RosSch62} gives
\begin{equation*}
n \leq \lceil 7\lambda/(5\ln\lambda)\rceil \in O(\gamma).
\end{equation*}
Therefore whether we choose only $\gamma$ primes from the interval or
are forced to compute all $n$ of them, the number of primes used is
always $O(\gamma)$.

We make one further note on the cost of producing $\gamma$ primes at
random.  In a practical implementation, one would likely choose numbers
at random (perhaps in a manner that avoids multiples of small primes),
and perform Monte Carlo primality testing to verify whether such a
number is prime.  Problems could arise if the algorithm produced
pseudoprimes $p_i$ and $p_j$ that are not coprime.  Thus one would
also have to consider the failure probability of primality testing in
the analysis of such an approach.

For the purposes of our analysis, we generate all the primes up to
$2\lambda$ with $\softoh(\lambda) = \softoh(T\log D)$ bit operations
using a wheel sieve \citep{Pri82}, which does not dominate the cost in
\eqref{eqn:cost2}, and guarantees that all of the chosen $p$ are
actually prime.

The total cost is $\gamma$ probes of degree at most $2\lambda$, where
\begin{align*}
\lambda &\in \softoh(T\log D), &\text{and}&&
\gamma &\in \softoh\left(\frac{\log D}{\log T} + \log\tfrac{1}{\mu}\right),
\end{align*}
 for a field-operation cost of 
\begin{equation}\label{eqn:cost2}
\softoh(L\gamma\lambda) = \softoh\left( L\left(\log D + \log\tfrac{1}{\mu}\right)T\log D \right).
\end{equation}

\begin{procedure}[ht]
\caption{FindPrimes($\SLP_f, f^*, T, D, \mu$)}\label{proc:FindPrimes}
\KwIn{
	$\SLP_f$, an SLP computing $f\in\FF_q[z]$;
        $f^* \in \FF_q[x]$ given explicitly;
        $T \geq \max(\#g,2)$; $D \geq \deg(g)$; $0 < \mu < 1/3$;
        where $g$ is defined as the unknown polynomial $f-f^*$.
}

\KwResult{A list of primes $p_i$ and 
        images $g(z) \bmod (z^{p_i}-1)$ 
        such that $\prod_{i=1}^\ell p_i \ge D$ and, 
        with probability exceeding $1-\mu$,
        $\COL_g(p_i) < T/2$ for $1 \leq i \leq \ell$.
}\smallskip

$\lambda \longleftarrow \lambdadef$\\
$\gamma \longleftarrow \gammadef$\\
$\ell = \ldef$\\

\If{$\gamma \leq 3\lambda/(5\ln\lambda)$}{
	$\mathcal{P} \longleftarrow$ \text{ $\gamma$ primes chosen at random from $[\lambda, 2\lambda]$ }\\
}\Else{
$\mathcal{P} \longleftarrow$ \text{ all primes from $[\lambda, 2\lambda]$ }\\
}

\lForEach{$p \in \mathcal{P}$}{
  $g_p \longleftarrow$ \ref{proc:ComputeImage}$(\SLP_f,f^*,1,p)$
}
$(p_1, p_2, \dots ) \longleftarrow \mathcal{P} \text{ sorted in decreasing order of $\#g_{p_i}$}$\\
\Return{ $(p_1, \dots, p_\ell), (g_{p_1}, \dots, g_{p_\ell})$ }
\end{procedure}

\section{Detecting deception}
\label{sec:div}

At this stage we have probably found primes $p_i$,
$\COL_\f(p_i) \leq T/2$, and their corresponding images $g_i$, for $1
\leq i \leq \ell$.  The challenge now remains to collect terms amongst
the images $\f_i$ that are images of the same term of $\f$.  For our
purposes we need a more general notion of diversification than that
introduced by \cite{GieRoc11}.  To this end we will construct images
\begin{equation}\label{eqn:gij}
\f_{ij} = \f(\alpha_j z) \bmod (z^{p_i}-1),
\end{equation}
where $\alpha_j \neq 0$ belongs to $\FF_q$ or a field extension $\FF_{q^s}$.

Any term of an image $g_i$ is an image of either a single term of $\f$, or a
sum of multiple terms of $\f$. Our algorithm needs to identify and
discard the terms in $g_i$ corresponding to multiple terms of $g$, using
only the single terms to reconstruct the actual terms in $g$. To analyse
this situation, consider the bivariate polynomials
\[
g(yz) \bmod (z^{p_i}-1) = h_{i,0} + h_{i,1} z + \cdots + h_{i,{p_i-1}}
z^{p_i-1},
\]
where each $h_{i,u} \in \FF_{q^s}[y]$ is the sum of the terms in $g$
with degrees congruent to $u$ modulo $p_i$. Each $h_{i,u}$ has
between 0 and $T$ terms and degree at most $D$ in $y$. For a given
$\alpha$, computing $g_i(\alpha z)$ gives the univariate polynomial
whose coefficient of degree $u$ is $h_{i,u}(\alpha)$.

Consider a single term in the unknown polynomial $g$. If that term does
not collide with any others mod $p_i$, then for some $u$, $h_{i,u}$
consists of that single term. If the same term does not collide modulo
$p_k$, then there exists some $v$ such that $h_{i,u} = h_{k,v}$. Obviously,
for any $\alpha\in\FF_{q^s}$, we will have $h_{i,u}(\alpha) =
h_{k,v}(\alpha)$, and our algorithm can use this correlation to
reconstruct the term, since its exponent equals $u$ mod $p_i$ and $v$
mod $p_k$. Based on the previous section, most of the terms in $g$ will
not collide modulo most of the $p_i$'s, and so there will be sufficient
information here to reconstruct those terms.

The problem is that we may have
$h_{i,u}(\alpha) = h_{k,v}(\alpha)$, but $h_{i,u}\ne h_{k,v}$. We call
this a \emph{deception}, since it may fool our algorithm into
reconstructing a single term in $g$ that does not actually exist. Our
algorithm will evaluate with multiple choices $\alpha_j$, $1 \leq j \leq m$, for $\alpha$, and we ``hope''
that, whenever $h_{i,u}\ne h_{k,v}$, at least one of  $\alpha_j$'s gives
$h_{i,u}(\alpha_j) = h_{k,v}(\alpha_j)$. In this case we say $\alpha_j$
\emph{detects} the deception. The following lemma provides this hope.

\begin{Lem}\label{Lem:detect}
Let $\f \in \FF_q[z]$ be a polynomial of degree at most $D$ and at most $T$ nonzero terms.
Let $p_i$ be a prime such that $\COL_\f(p_i) < T/2$ and let $g_i=g \bmod (z^{p_i}-1)$ for $1 \leq i \leq \ell$.
Let
\begin{align*}
m&= \mdef, \text{ and }\\
s&=\sdef.
\end{align*}
Choose $\alpha_1, \dots, \alpha_m$ at random
from $\FF_{q^s}^*$.  Then, with probability at least $1-\mu$, every deception
amongst the images $g_1, \dots, g_\ell$ is detected by at least one of the
$\alpha_j$.
\end{Lem}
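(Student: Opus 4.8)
The plan is to bound the probability that a \emph{fixed} deception goes undetected, and then union-bound over all deceptions. A deception is a pair $(h_{i,u}, h_{k,v})$ of the coefficient polynomials arising from the images $g_i$ and $g_k$ with $h_{i,u} \neq h_{k,v}$ but $h_{i,u}(\alpha) = h_{k,v}(\alpha)$ for the chosen $\alpha$. Since $h_{i,u} - h_{k,v}$ is a nonzero polynomial in $\FF_{q^s}[y]$ of degree at most $D$, it has at most $D$ roots in $\FF_{q^s}$. A single random $\alpha \in \FF_{q^s}^*$ is a root with probability at most $D/(q^s - 1)$. With the choice $s = \sdef$ we have $q^s \geq 2D+1$, so $q^s - 1 \geq 2D$, and this probability is at most $1/2$. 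Hence with $m$ independent choices $\alpha_1,\dots,\alpha_m$, a fixed deception is missed by all of them with probability at most $2^{-m}$.

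Next I would count the deceptions. There are at most $\binom{\ell}{2} \leq \ell^2/2$ ordered... unordered pairs $(i,k)$ of primes, and for each such pair at most $p_i$ choices of $u$ and $p_k$ choices of $v$; more efficiently, each image $g_i$ has at most $p_i \leq 2\lambda$ nonzero coefficient blocks $h_{i,u}$, but we can do better: the number of \emph{distinct} bivariate coefficient polynomials $h_{i,u}$ across all $\ell$ images that matter is small because only blocks that collect at least one term of $g$ are relevant, and $g$ has at most $T$ terms, so each image has at most $T$ nonzero blocks. Thus the number of candidate pairs is at most $\ell^2 T^2 / 2$ — actually, being a bit careful, I would bound the number of (block, block) pairs by $(\ell T)^2$ and absorb constants. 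Taking $m$ large enough that $2^{-m} \cdot (\ell T)^2 \leq \mu$, i.e. $m \geq \log(1/\mu) + 2\log(\ell T)$, suffices. Comparing with the stated $m = \mdef$, note $2\log T + 2\log(1 + \ell/4) \geq 2\log(\ell T)$ up to the constant inside, and the extra $\log\frac{1}{2\mu}$ versus $\log\frac{1}{\mu}$ gives a little slack; I would reconcile the precise constants in the routine calculation.

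The main obstacle is getting the counting tight enough to match the claimed value of $m$, in particular justifying the $2\log(1 + \ell/4)$ term rather than a cruder $2\log\ell$ or $2\log(\ell^2)$. This requires observing that a deception involves an \emph{ordered} or \emph{unordered} pair of images and carefully choosing which blocks can participate — presumably one fixes a reference image (say the one with the most terms, or iterates over a spanning set) so that the effective number of pairs is linear rather than quadratic in $\ell$, bringing the count down to roughly $(\ell/4)^2 T^2$ or similar after accounting for the fact that we only need to detect deceptions among the $\lceil \ell/2\rceil$-or-more coincidences that could actually fool the reconstruction step. A secondary technical point is confirming that over a field extension $\FF_{q^s}$ (rather than a prime field) the Schwartz–Zippel-type root count $D$ on $h_{i,u} - h_{k,v}$ still holds, which it does since degree in the single variable $y$ is preserved and is at most $D$. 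Once the count is pinned down, the union bound over the $m$-fold independent failures closes the argument.
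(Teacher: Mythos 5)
Your first half is exactly the paper's argument: treat a fixed deception as a pair of distinct polynomials $h_{i,u}\neq h_{k,v}$ of degree at most $D$ in $y$; with $s=\sdef$ so that $q^s\geq 2D+1$, a random $\alpha\in\FF_{q^s}^*$ is a root of $h_{i,u}-h_{k,v}$ with probability at most $1/2$; hence $m$ independent choices all miss with probability at most $2^{-m}$; then union bound. That part is fine, including the remark that degree in the single variable $y$ is what matters over the extension.

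The gap is in the deception count, and you sense it but resolve it the wrong way. You bound the number of relevant blocks by ``at most $T$ nonzero blocks per image, $\ell$ images, so $\ell T$ blocks,'' giving roughly $(\ell T)^2/2$ deceptions and forcing $m\gtrsim\log\frac{1}{\mu}+2\log T+2\log\ell$. The stated $m$ has $2\log(1+\ell/4)$ in place of $2\log\ell$, which is about $4$ less for large $\ell$, so your bound does not establish the lemma with the given $m$. Your proposed repair (fix a reference image, exploit the $\lceil\ell/2\rceil$-coincidence threshold, reduce from quadratic to linear in $\ell$) is not where the savings come from and would not be straightforward to make rigorous. The paper's actual count uses two facts you did not: (a) the hypothesis $\COL_g(p_i)<T/2$ means each image $g_i$ has at most $T/4$ blocks $h_{i,u}$ that are sums of two or more terms of $g$ (a collision ties up at least two terms), contributing at most $\ell T/4$ such blocks over all images; and (b) a block $h_{i,u}$ that is a \emph{single} term $cz^e$ of $g$ is, as a polynomial in $y$, just $cy^e$ \emph{independently of $i$ and $u$}, so across all $\ell$ images there are at most $T$ distinct singleton blocks in total, not $\ell T$. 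Summing, the number of distinct nonzero $h$-polynomials is at most $T(1+\ell/4)$, giving fewer than $\tfrac12 T^2(1+\ell/4)^2$ deceptions, which is exactly what makes $m=\mdef$ suffice (the $\log\frac{1}{2\mu}$ absorbs the factor $\tfrac12$). Without observation (b) in particular, the count is off by a constant factor that translates into an additive error of roughly $4$ in $m$, so this is a genuine (if localized) hole in your argument.
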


\begin{proof}
Consider $h_{i,u}, h_{k,v}$ as above, with $h_{i,u} \neq h_{k,v}$.  As $(h_{i,u}-h_{k,v})(y)$ has degree at most $D$, there are at most $D$ choices of $\alpha$ for which $h_{i,u}(\alpha) = h_{k,v}(\alpha)$.

Thus, if $q^s > 2D$ and $\deg(h_{i,u}-h_{k,v})\leq D$, then at most half of
the $\alpha \in \FF_{q^s}^*$ can comprise a root of a $h_{i,u}-h_{k,v}$.
Thus, if we then select $\alpha_1, \dots, \alpha_m$ at random from
$\FF_{q^s}^*$ and construct images $g(\alpha_jz ) \bmod (z^{p_i}-1)$
for $i=1,\dots,\ell$, $j=1,\dots,m$, we will fail to detect a single given
deception with probability at most
\begin{equation}
  \label{eqn:proofdetect}
  \begin{aligned}
  \left(\frac{1}{2}\right)^m 
  & =  \left(\frac{1}{2}\right)^{\mdefsmall}\\
  & \leq \frac{\mu}{\tfrac{1}{2}T^2(1+\ell/4)^2}~.
  \end{aligned}
\end{equation}
As $\f$ has at most $T$ terms, there are at most $2^T-1$ possible
choices for $h_{i,u}\neq 0$; however, only a small proportion of these
choices may correspond to a term in an image $g_i$.  Each $g_i$, $1
\leq i \leq \ell$, contains at most $\COL_\f(p_i)/2 \leq T/4$ terms that are images of a
sum of at least two terms of $\f$.  The remaining nonzero terms of $g_i$ are images of
single terms of $\f$, of which there are at most $T$.
Thus there are
fewer than $T(1+\ell/4)$ distinct sums corresponding to one of the nonzero $h_{i,u}$.

A deception occurs between a pair of such sums, thus there are fewer than $\tfrac{1}{2}T^2(1+\ell/4)^2$ deceptions.  It then follows from \eqref{eqn:proofdetect} that the probability that at least one deception is not detected by any of $\alpha_1, \dots, \alpha_m$ is bounded above by $\mu$.
\end{proof}

\enlargethispage{6pt}

It is important to note that $m$ is logarithmic in $T$ and $\ell$, so
that a multiplicative factor of $m$ will not affect the ``soft-Oh'' cost of
the algorithm in terms of $T$ or $D$.  If $q \leq 2D$, then we need
instead to work in an extension of $\FF_q$ of degree
$s=\lceil\log_q(2D+1)\rceil$.  The cost of computing $g_{ij}$, $1 \leq i
\leq\ell, 1 \leq j \leq m$, in terms of $\FF_q$-operations, becomes
\begin{equation*}
\softoh( L\ell m \lambda s ) = \softoh\left(LT\log^2 D\left\lceil \frac{\log D}{\log q}\right\rceil \log\tfrac{1}{\mu}\right),
\end{equation*}
which dominates the cost \eqref{eqn:cost2} of Section \ref{sec:primes}.

\section{Identifying images of like terms}
\label{sec:terms}

As we construct the images $g_{ij}$, we will build vectors of
coefficients of images $g_{ij}$.  Namely, for every congruence class
$e \bmod p_i$ for which there exists a nonzero term of degree $e$ in
at least one image $g_{ij}$, we will construct a vector $v^{i,e} \in
\FF_{q^s}^m$, where $v^{i,e}_j$ contains the coefficient (possibly
zero) of the term of $g_{ij}$ of degree $e$.

We will use the vectors $v^{i,e}$ to identify terms of the images $g_i$
that are images of like terms of $g$.  We use these vectors $v^{i,e}$ as
keys in a dictionary $\tuples$.  
Each value in the dictionary is comprised of 
a list of those tuples $(e,i)$ for which $v^{i,e}=v$. 

Provided the probabilistic steps of Sections \ref{sec:primes} and
\ref{sec:div} succeeded, if a key $v$ is found more than $\ell/2$ times,
then it corresponds to a single, distinct term of $g$, as opposed to a
sum of terms of $g$. This is indicated by the size of the list
$\tuples(v)$ being at least~$\ell/2$.

The dictionary $\tuples$ should be an ordered dictionary that supports
logarithmic-time insertion and retrieval. Any balanced binary search tree,
such as a red-black tree will be suitable (see, e.g., \citep{CLRS01}, Chapter 13).
To set $\tuples(v)$, we first search to see if $v$ is an existing key;
if not, an empty list is first inserted as the value of $\tuples(v)$.

A red-black tree
of size $n$ requires $\mathcal{O}(\log n)$ comparisons for insert and
search operations.  We compare keys $v \in \FF_{q^s}^m$ lexicographically,
which may entail $m$ comparisons of elements in $\FF_{q^s}$.  Each
comparison therefore requires $\bigoh(ms\log q)$ bit operations.

The number of different vectors $v^{i,d}$ that will appear is bounded
above by the number of distinct subsets of terms of $g$ which can
collide modulo $(z^{p_i}-1)$, for any of the primes $p_i$. Since there
are $\ell$ primes, and at most $T$ terms in $g$, there are no more
than $T\ell$ vectors which will be inserted as keys into $\tuples$.
Thus, the cost of constructing this tree is
\begin{equation}
  \label{eqn:cost4}
  \begin{aligned}
   &\bigoh(T\ell ms \cdot \log q \cdot \log(T\ell)) \\
   & ~~ = \softoh\left( T\log D
    \left(\log D + \log q\right) \log\tfrac{1}{\mu} \right)
  \end{aligned}
\end{equation}
bit operations.  This cost is dominated by that of constructing
the $g_{ij}$.  Each term in each $g_{ij}$ contains an element of
$\FF_{q^s}$ and an exponent at most $2\lambda$.  This requires $\softoh(
s\log q + \log \gamma)$ bits, which is
$\softoh(\log D + \log q + \log T + \log \tfrac{1}{\mu})$.  
The additional bit-cost of traversing the
images $g_{ij}$ and appending to the lists in $\tuples(v)$ is reflected in the
cost of their construction in Section \ref{sec:div}.

After we have constructed the dictionary $\tuples$,
we traverse it again to build terms
of $g$.  For every key $v$ whose corresponding list has size at least
$\ell/2$, we have all
the pairs $(i,d)$ such that $v^{i,d} = v$.  What remains is to
construct the term corresponding to the key $v$.  We reconstruct the
exponent by Chinese remaindering on the first $\ell/2$ congruences $d
\bmod p_i$.  As each exponent is at most $D$, the cost of constructing
one exponent is bounded by $\softoh( \log^2 D )$ bit operations.  Thus the
total bit-cost of Chinese remaindering becomes $\softoh( T\log^2 D)$.
As the bit cost of $s$ operations in $\FF_q$ is $\softoh(\log q + \log
D)$, this cost of Chinese remaindering is bounded asymptotically by
$\eqref{eqn:cost4}$, the cost of constructing $\tuples$ itself.

We obtain the coefficient by inspection of $g_i$.  We sum all of these constructed terms into a polynomial $f^{**}$ approximating $g=f-f^*$.

Procedure \ref{proc:BuildApproximation} restates the method described to
construct $f^{**}$.  
For the sake of brevity, this procedure does not detail the data
structures used for the polynomials $g_{ij}$ and $f^{**}$ that it
constructs. In order to achieve the stated complexity bounds, these
sparse polynomials must be implemented by dictionaries mapping exponents
to coefficients that support logarithmic-time insertion and retrieval.
As with $\tuples$, a red-black tree, hash table, or similar standard
data structure could be used. Converting between such a representation
and the usual list of coefficient-exponent pairs is trivial.

\begin{procedure}[ht]
\caption{BuildApproximation($\SLP_f, f^*, T, D, \mu$)}\label{proc:BuildApproximation}
\KwIn{
  $\SLP_f$, an SLP computing $f \in \FF_q[z]$;
  $f^* \in \FF_q[x]$ given explicitly;
  $T \geq \#g$; $D \geq \deg(g)$; 
  $0 < \mu < 1/3$;
  where $g$ is defined as the unknown polynomial $f-f^*$.
}

\KwResult{
  $f^{**}$ such that $g-f^{**}$ has at most $T/2$ terms, with probability greater than $1-\mu$.
}
\bigskip

$(p_1,\ldots,p_\ell), (g_1,\ldots,g_\ell)
  \longleftarrow$ \ref{proc:FindPrimes}$(\SLP_f, f^*, T, D, \mu)$\\

$m \longleftarrow \mdef$\\
$s \longleftarrow \sdef$\\
$(\alpha_1, \dots, \alpha_m) \longleftarrow$ $m$ randomly chosen nonzero elements
\phantom{$(\alpha_1, \dots, \alpha_m) \longleftarrow$} from $\FF_{q^s}$;\\
$\tuples \longleftarrow $ dictionary mapping $\FF_{q^s}^m$ to lists of pairs of
\phantom{$\tuples \longleftarrow $} integers;\\
\medskip

\For{$i \longleftarrow 1$ \KwTo $\ell$}{

	\ForEach{$e \in \bigcup_{1 \le j \le m} \expons(g_{ij})$}{ 
    \smallskip

    $v \longleftarrow \big(\coeff(g_{i1},e), 
      \ldots, \coeff(g_{im},e)\big)$ \\
    $\tuples(v) \longleftarrow \tuples(v),\ (i,e)$ \\
  }
}\medskip

$f^{**} \longleftarrow 0$ \\

\ForEach{$v \in \tuples$}{
  \If{$\#\tuples(v) \ge \ell/2$}{
		$e \longleftarrow$ solution to congruences
		\phantom{$e \longleftarrow$}$\{ e_i \bmod p_i \mid (i,e) \in \tuples(v)\}$;\\
		$c \longleftarrow \coeff(g_i, e)$, for any of the
		\phantom{$c \longleftarrow$}$(i, e) \in \tuples(v)$;\\
		$f^{**} \longleftarrow f^{**} + cz^e$
  }
}\medskip

\Return{$f^{**}$}
\end{procedure}

\section{Updating our approximation}\label{sec:recurse}

Recall we have a polynomial $f^*$ approximating $f$ given by our straight-line
program.  We construct a polynomial $f^{**}$ that is comprised of at least
$T/2$ terms of $g=f-f^*$.  Once we have $f^{**}$, we set $f^* \leftarrow f^* +
f^{**}$, $T \leftarrow \lfloor T/2 \rfloor$, and
repeat the process until $T$ is $1$, at which point $g$ consists of (at most)
a single nonzero term.

We thus execute this process at most $\log T$ times, where $T$ is the
initial bound on the number of non-zero terms of $f$.  Recall that the steps of sections
\ref{sec:primes} and \ref{sec:terms} each succeed with probability greater than
$1-\mu$.  Thus, if we would like the algorithm to succeed with probability
greater than $1-\epsilon > 1/2$, we can set 
\begin{equation}\label{eqn:mu}
\mu = \epsilon/(2\log T).
\end{equation}
As $T\ge 2$ and $\epsilon \le 1/2$, we 
will always have $\mu \le 1/4$, which satisfies the constraint $\mu<1/3$
from Section~\ref{sec:primes}.

When $T=1$, $g$ is a single term and its coefficient is given by $g(1)$.  The
exponent of the term comprising $g$ may be computed from $g \bmod (z^p-1)$ for the first $\log D$ primes $p$.  This cost is $\log D$ probes of degree
$\softoh(\log D)$, or a cost of $\softoh(\log^2 D \log q)$ bit operations.

We give the interpolation algorithm in procedure 
\ref{proc:MajorityVoteSparseInterpolate}.  We call the algorithm 
``majority-vote'' sparse interpolation, as we effectively require a
majority of the images $f_i$, $1 \leq i \leq \ell$, to vote on whether
a sum of terms of $f$ is in fact a single term.

\begin{procedure}[ht]
\caption{MajorityVoteSparseInterpolate($\SLP_f, T, D,\epsilon$)}\label{proc:MajorityVoteSparseInterpolate}

\KwIn{$\SLP_f$, an SLP computing $f \in \FF_q[z]$; $T \geq \# f$; $D \geq \deg(f)$; $0 < \epsilon < 1/2$.}
\KwResult{The sparse representation of $f$, with probability at least $1-\epsilon$.}
\smallskip

$(f^*, \mu) \longleftarrow (0,\ \epsilon/(2\log T))$\\

\While{$T > 1$}{

	$f^{**} \longleftarrow$ \ref{proc:BuildApproximation}$(\SLP_f, f^*, T, D, \mu/2)$\\
	$f^* \longleftarrow f^* + f^{**}$\\
	$T \longleftarrow \lfloor T/2 \rfloor$\\
}\bigskip
$(c,e) \longleftarrow ($\ref{proc:ComputeImage}$(\SLP_f,f^*,1,1),0)$\\
\lIf{$c = 0$}{ \Return $f^*$ }
\For{the first $\log D$ primes $p$ }{
$e' \longleftarrow$ degree of the single term in \\
\hspace{24pt}  \ref{proc:ComputeImage}$(\SLP_f,f^*,1,p)$;\\
Update $e$ with $e'$ modulo $p$ by Chinese remaindering.
}

\Return $f^* + cz^e$
\end{procedure}

\section{Cost analysis}\label{sec:cost}

We now are ready to analyze the cost of the algorithm
\ref{proc:MajorityVoteSparseInterpolate} and verify
Theorem~\ref{thm:cost}.  As we have argued in Sections
\ref{sec:primes}--\ref{sec:terms}, the cost of one iteration of the
algorithm is dominated by the cost of constructing the images
$g_{ij}$.  Recall that there are $\ell$ primes $p_i$, each less than
or equal to $2\lambda$, and there are $m$ elements $\alpha_j$'s, each
in an extension $\FF_{q^s}$. The total number of field operations is
thus $\softoh(L\ell m\lambda s)$. The values of $L, T, D$, and
$\epsilon$ are specified in the input. Recall the following
parameters from equation \eqref{eqn:mu},
Corollary~\ref{Cor:lambda_ell} and Lemma~\ref{Lem:detect}:

\begin{align*}
  \mu &= \epsilon/(2\log T) 
  \Rightarrow \log\tfrac{1}{\mu} \in \softoh(\log\tfrac{1}{\epsilon} + \llog T), \\
  \lambda &= \lambdadef \in \softoh(T\log D), \\
  \ell &= \ldef \in \softoh( (\log D)/(\log T)),\\
  m &= \mdef, \\
  &~~~~~~\in \softoh(\log\tfrac{1}{\epsilon} + \log T + \llog D), \\
  s &= \sdef \in \softoh(1 + (\log D)/(\log q)).
\end{align*}
Therefore the total cost of the $\softoh(L\ell m\lambda s)$ operations
in $\FF_q$ is
\begin{equation}
  \softoh\left(
    L T \log^2 D \left(\log D + \log q\right)\log\tfrac{1}{\epsilon}
  \right)
\end{equation}
bit operations.  The multiplicative $\log T$ factor due to the number
of iterations does not affect the ``soft-Oh'' cost above.  This cost
is a multiplicative factor of $\softoh(T)$ improvement over the cost
\eqref{eqn:divcost} of diversified interpolation \citep{GieRoc11}.  It
also improves over the cost \eqref{eqn:reccost} of recursive
interpolation \citep{ArnGieRoc13} by a multiplicative factor of
$\softoh(\min(\log D, \log q))$.

\section{Conclusions}
\label{sec:conclusion}

We have presented a new algorithm for sparse interpolation over an
arbitrary finite field that is asymptotically faster than those
previously known. In terms of bit operations it improves on previous
methods by ``soft-Oh'' multiplicative factor of $T$, $\log
D$, or $\log q$.

In this ``majority-vote'' interpolation, we combine the main ideas of the
previous two algorithms. Namely, we reduce the probe degree by only
aiming to reconstruct {\em some} of the terms of $f$ at every iteration;
and we distinguish images of distinct terms (and subsets of terms) by
way of diversification.

We mention a few open and motivating problems. First, we believe the
algorithm in this paper has considerable potential in the more
traditional numerical (floating point) domain.  There, an unknown
sparse polynomial is reconstructed from a small number of evaluations
in $\mathbb{C}$ under a standard backward-error model of precision and
stability.  See \citep{GieRoc11} for an example of a straight-line
program interpolation algorithm adapted to floating point
computation. Our hope is that by evaluating at lower-order roots of
unity (such as used in this paper) we can provably increase the
numerical stability over the Prony-like algorithm of \cite{GLL09},
while maintaining its near-optimal efficiency.

Second, the algorithm presented here is Monte Carlo. It remains
unknown whether there exist faster deterministic or Las Vegas
polynomial identity tests that may render recursive or majority-vote
interpolation Las Vegas of the same complexity.

Finally, as noted earlier, an information theoretic lower-bound on
sparse interpolation suggests a minimum
bit-complexity of
$\softoh(LT(\log q + \log D))$ bit operations. While this paper gets
closer, some considerable improvements remain to be found.

\section{Acknowledgements}

The first two authors would like to acknowledge the support of the Natural Sciences and Engineering Research Council of Canada (NSERC).
The third author is supported by the National Science Foundation,
award \#1319994.

\bibliographystyle{plainnat}


\end{document}